\newtheorem{theorem}{Theorem}
\newtheorem{definition}{Definition}
\newtheorem{lemma}{Lemma}
\begin{document}

% \title{A Framework for Efficient Distributed Evaluation and Enhancement of Power Grid Stability}

\title{A Framework for Distributed and Compositional Stability Analysis of Power Grids}

%\title{An Efficient Distributed Approach for Stability Analysis  of Power Grids}

\author{Stefanos Baros, Andrey Bernstein and Nikos Hatziargyriou
\thanks{This work was authored in part by the National Renewable Energy
Laboratory, operated by Alliance for Sustainable Energy, LLC, for the U.S.
Department of Energy (DOE) under Contract No. DE-AC36-08GO28308. Funding provided by DOE Office of Electricity, Advanced Grid
Modeling Program, through agreement NO. 33652.}
\thanks{S. Baros and A. Bernstein are with the Power Systems Engineering
Center, National Renewable Energy Laboratory, Golden, CO 80401, USA
stefanos.baros@nrel.gov, andrey.bernstein@nrel.gov.}
\thanks{Nikos Hatziargyriou is with the ECE Department of National Technical University of Athens, Greece, nh@power.ece.ntua.gr.}}

\maketitle

\begin{abstract}
    Operating modern power grids with stability guarantees is admittedly imperative. Classic stability methods are not well-suited for these dynamic systems as they involve centralized gathering of information and computation of the system's eigenvalues, processes which are oftentimes not privacy-preserving and computationally burdensome. System operators (SOs) would nowadays have to be able to quickly and efficiently assess small-signal stability as the power grid operating conditions change more dynamically while also respect the privacy of the distributed energy resources (DERs). Motivated by all these, in this paper we introduce a framework that comprises a \textit{computationally efficient,} \textit{privacy preserving, distributed} and \textit{compositional} stability assessment method. Our proposed method first calls for representative agents at various buses to exchange information with their neighbors and design their local controls in order to meet some local stability conditions. Following that, the agents are required to notify the SO whether their local conditions are satisfied or not. In case the agents cannot verify their local conditions they can augment their local controls using a global control input.  The SO can then warrant stability of the interconnected power grid by assembling the local stability guarantees, established by the agents, in a \textit{compositional} manner. We analytically derive the local stability conditions and prove that when they are collectively satisfied stability of the interconnected system ensues. We illustrate the effectiveness of our proposed DSA method via a numerical example centered around a three-bus power grid.
\end{abstract}
\begin{IEEEkeywords}
Distributed, stability assessment, power grids, microgrids, distribution grids, multi-agent system.
\end{IEEEkeywords}
\section{Introduction}
The main constituents of bulk power grids have traditionally been bulk synchronous generators with large inertias and slow-varying loads. Stability of these systems has been well-maintained by the control systems of generators \cite{Nikos_stability}. Over the recent years though, the increasing penetration of renewable energy resources (RERs) and fast-varying demand response resources has resulted in significant reduction of the inertia of these systems and  rendered stability a more critical issue.  It is now imperative to develop advanced control methodologies that can enable  wind power plants \cite{Baros_distr_torque}, \cite{Baros_distr_wind_stor}, \cite{Baros_distr_consensus_wind_control} and other RERs \cite{Baros_emerging_control} to provide ancillary services in order to enhance power grid stability. At the same time, the electricity markets have to be restructured so at to promote optimal operation of power grids with renewables \cite{Baros_dyn_markets}, \cite{Baros_rel_contracts}, \cite{Baros_opt_AGC}. 
\par  The trend towards decentralizing  power generation \cite{Nikos_microgrids} has rendered stability assessment of modern power grids very challenging. This is particularly true for distribution grids and large microgrids as their operators, have now hard time obtaining accurate information pertaining to the  models of the numerous and largely heterogeneous distributed energy resources (DERs) \cite{Nikos_DER} scattered throughout vast geographical areas \cite{Towards_dist_stab} in order to carry out stability analysis. In addition, classical centralized methods for stability assessment, involving time-domain simulation \cite{Numerical_simulations}, eigenvalue analysis \cite{eig_analysis}, and direct methods \cite{Direct_stab_anal}, are inherently inefficient and not privacy preserving. Hence, they are not well-suited for power grids that comprise numerous distributed, complex and heterogeneous resources that place a lot of value on their privacy. These methods fall short primarily because they are computationally expensive and thus slow while they require extensive exchange of information through dedicated communication channels, part of which, is often private. With all these in mind, one might ask - \textit{How can we evaluate and certify small-signal stability of modern power grids in a fully distributed, computationally efficient and privacy-preserving way}? In this paper, we aim to tackle this particular question. The scientific community has recently started exploring new distributed methodologies for assessing stability of power grids which respect the privacy of energy resources and are computationally efficient. Several publications that we singled out as representative of this line of work are  \cite{Towards_dist_stab}, \cite{Hill_distr_stab}, \cite{LeXie_online_dynamic_security}, \cite{Soumya_sos_approach}, \cite{Soumya_distr_contr} and  \cite{Tabuada}.
\par In \cite{Towards_dist_stab},  a passivity-based approach for establishing system-wide stability is proposed. This approach is quite conservative as it relies on the rather strict condition that the local bus dynamics have to be passive. In \cite{Hill_distr_stab}, a method that enables a bus to reconstruct the system dynamic Jacobian matrix from data and use that to carry out stability evaluation locally is presented. This method requires each bus to perform a lot of computations locally and exchange a significant amount of information with its neighbors. In \cite{LeXie_online_dynamic_security} and \cite{LeXie_interactive_control},  approaches based on dissipative system theory are proposed that can enable microgrid agents to assess system-wide stability in a distributed fashion. These approaches are not fully distributed as they still require the system operator to compute the Jacobian of the system and communicate information to multiple agents. In \cite{Soumya_sos_approach} and \cite{Soumya_distr_contr},  sum-of-squares approaches are introduced for distributed stability assessment and control of large-scale nonlinear systems using vector Lyapunov functions. These methods, although very effective for nonlinear systems with hard-to-compute Lyapunov functions, are not scalable for large-scale systems. In \cite{Turitsyn_microgrids}, the authors derived distributed stability criteria that are particularly tailored to droop-controlled inverter-based microgrids. Finally, in \cite{Tabuada} and \cite{LeXie_transient_stab}, approaches for compositional transient stability analysis are proposed, which is not the focus of this paper. 
\par \textbf{\textit{Contributions}}. In this paper, we focus on the problem of assessing and certifying small-signal stability of a power grid in a \textit{fully distributed}, \textit{computationally efficient} and \textit{privacy-preserving way}. Our distributed stability assessment (DSA) methodology is general enough and it can be applied to microgrids, transmission and distribution grids. Our main contributions are highlighted as follows.
\begin{itemize}
\item We build on several results from \cite{Siljak_dec}, \cite{Siljak_large_scale} and analytically derive a simple condition that can be assessed locally by bus agents, leading to guaranteed small-signal stability of a power grid. We also derive a relaxed version of this distributed stability condition.
\item We design an algorithm for DSA by leveraging the derived local condition. The proposed algorithm requires multiple agents to exchange limited information with their neighbors, design the bus-level control laws in order to meet the local stability condition and then inform the system operator whether the underlying condition is satisfied. The SO can then establish stability of the overall system by combining the local stability guarantees and invoking our main theorem.
\item We introduce a global control design approach that can enable the agents to meet their local stability condition, in case using local controls only is ineffective, and thus enable stabilization of the overall system.
\item We numerically show that the proposed distributed methodology can result in a small-signal stable overall system via a detailed three-bus power grid example.
\end{itemize}
\par The rest of the paper is structured as follows. In Section II, we introduce our distributed stability assessment (DSA) methodology and the main results of the paper. In Section III, we illustrate the effectiveness and practicality of our methodology via a detailed three-bus power grid example. In Section IV,  we conclude the paper with some remarks and an account of future work.
\section{Framework for Distributed and Compositional Stability Analysis}
In this section, we develop our methodology for distributed stability assessment of power grids in a step-by-step fashion.
\subsection{Stability of Decoupled Subsystems}
We depart from the dynamical model of a power system arising from the interconnection of $N$ linear subsystems
\begin{equation}
\Sigma: \dot{x}_i=A_i x_i+\sum_{j\in\mathcal{N}_i} A_{ij}x_j,\hspace{3mm} i\in\mathcal{N} \label{interconnected_sys}
\end{equation}
In this representation, $x_i\in\mathbb{R}^{n_i}$ denotes the state-vector corresponding to subsystem $i$, $\mathcal{N}:=\{1,...,N\}$ the set of all subsystems and $\mathcal{N}_i$ the set of subsystems that are adjacent and directly connected to subsystem $i$. The closed-loop dynamics of each isolated decoupled system can be described by
\begin{align}
 \dot{x}_i=A_ix_i,\hspace{3mm} i\in\mathcal{N} \label{isolated_sys}
\end{align}
We assume that $A_i$ has full rank  so that the equilibrium of each decoupled subsystem is $x_{i}^{\star}=0_{n_i}$ and that the local controllers are tuned so that this is asymptotically stable. A certificate of this stability property is a Lyapunov function
\begin{align}
V_i(x_i)=x_i^\top P_i x_i,\; V_i\in\mathbb{R}_{+}
\end{align}
for which it holds
\begin{align}
&P_i\succ 0\\
&A_i^\top P_i+P_iA_i=-Q_i,\;\;Q_i\succ 0
\end{align}
These conditions translate into the following inequalities
\begin{align}
V_i(x_i)&>0, \;\;\forall x_i\in\mathbb{R}^{n_i}\setminus 0_{n_i} \\
\dot{V}_i(x_i)&<0, \;\;\forall x_i\in\mathbb{R}^{n_i}\setminus 0_{n_i} 
\end{align}
where $\dot{V}_i(x_i)$ is the derivative of the Lyapunov function $V_i$ along the trajectories of the decoupled system \eqref{isolated_sys}. By exploiting the stability certificates for the decoupled subsystems, we now seek to derive criteria that would allow us to establish stability of the interconnected system $\Sigma$ in \eqref{interconnected_sys} in a fully distributed and compositional manner.
\subsection{Distributed Stability Condition}
We pose the main question in this paper as follows: \textit{Granted that all the isolated subsystems $i\in\mathcal{N}$ described by \eqref{isolated_sys} are asymptotically stable, is there an additional distributed condition, that can be assessed locally by each subsystem $i\in\mathcal{N}$, than can result in stability of the interconnected system $\Sigma$ in \eqref{interconnected_sys}?} 
\par We address this question following a constructive approach; we derive a distributed condition that warrants stability of the interconnected system. Our analysis builds on several key results from \cite{Siljak_dec}, \cite{Siljak_large_scale}. Let us first consider a candidate Lyapunov function for the interconnected system $V(x)$ constructed as:
\begin{align}
    V(x)=\sum_{i\in\mathcal{N}}V_i(x_i),\hspace{3mm} V(x)>0,\hspace{3mm} \forall x\in\mathbb{R}^{\overline{n}}\setminus 0_{\overline{n}} 
\end{align}
where $\overline{n}=\sum_{i\in\mathcal{N}}n_i$. Clearly, our goal now is to derive conditions under which $\dot{V}(x)<0,\;\;\forall x\in\mathbb{R}^{\overline{n}}\setminus 0_{\overline{n}}  $. To carry out that, we compute the derivative of $V(x)$ along the trajectories of the interconnected system \eqref{interconnected_sys} as follows
\begin{align}
    \dot{V}(x)=\sum_{i\in\mathcal{N}}\dot{V}_i(x_i)=\sum_{i\in\mathcal{N}}(\nabla V_i^\top A_ix_i+\nabla V_i^\top \sum_{j\in\mathcal{N}_i} A_{ij}x_j).
\end{align}
We know that
\begin{align}
\nabla V_i^\top&=x_i^\top(P_i+P_i^\top)\\
x_i^\top P_i^\top A_i x_i&=x_i^\top A_i^\top P_i x_i
\end{align}
Given these, one readily obtains
\begin{align}
    \sum_{i\in\mathcal{N}}\nabla V_i^\top  A_ix_i=\sum_{i\in\mathcal{N}}x_i^\top(A_i^\top P_i+P_iA_i)x_i=-\sum_{i\in\mathcal{N}}x_i^\top Q_ix_i.
\end{align}
In light of that,  the following inequality arises
\begin{align}
    \dot{V}(x)&=\sum_{i\in\mathcal{N}}\dot{V}_i(x_i)\leq-\sum_{i\in\mathcal{N}}\lambda_{min}(Q_i)\|x_i\|_2^2\nonumber\\
    &+\sum_{i\in\mathcal{N}}\|\nabla V_i^\top\|_2\sum_{j\in\mathcal{N}_i} \|A_{ij}x_j\|_2. \label{Vdot_ineq}
\end{align}
Further, we know that
\begin{align}
    \|A_{ij}x_j\|_2^2&=x_j^\top A_{ij}^\top A_{ij}x_j\leq \lambda_{max}(A_{ij}^\top A_{ij})\|x_j\|_2^2\\
    	\Rightarrow \|A_{ij}x_j\|_2&\leq \sqrt{\lambda_{max}(A_{ij}^\top A_{ij})}\|x_j\|_2. \label{ineq_inter}
\end{align}
We assume that $P_i=P_i^\top$, and derive a bound on $\|\nabla V_i^\top\|_2$ as
\begin{align}
\|\nabla V_i^\top\|_2=2\| x_i^\top P_i\|_2\leq 2 \|x_i\|_2 \|P_i\|_2 =  2 \sigma_{max}(P_i)\|x_i\|_2. \label{ineq_grad1}
\end{align}
For square symmetric matrices. we know that
% The matrix $P_i$ can be decomposed as:
% \begin{align}
% P_i=U_i \Lambda_i V_i^\top    ƒ
% \end{align}
% where $\Lambda_i$ is a diagonal matrix with the eigenvalues of $P_i$ on its diagonal. In view of that, we can obtain
% \begin{align}
% P_i^\top P_i=   V_i \Lambda_i^\top U_i^\top  U_i \Lambda_i V_i^\top =V_i \tilde{\Lambda}_i V_i^\top
% \end{align}
% where $\tilde{\Lambda}_i=\Lambda_i^\top \Lambda_i$ is a diagonal matrix with the squares of the eigenvalues of $P_i$ on its diagonal. It follows that
% \lambda(P_i^\top P_i)&=\lambda^2(P_i)\nonumber\\
% \Rightarrow \sigma_{max}(P_i)&=\sqrt{\lambda_{max}(P_i^\top P_i)}=\lambda_{max}(P_i) \label{sigma_max}
% \end{align}
\begin{align}
\sigma_{max}(P_i)&=\sqrt{\lambda_{max}(P_i^\top P_i)}=\lambda_{max}(P_i) \label{sigma_max}
\end{align}
where $\lambda(\cdot)$ is the eigenvalue operator. Taking into account \eqref{sigma_max},  inequality \eqref{ineq_grad1} can be finally written as
\begin{align}
    \|\nabla V_i^\top\|_2 \leq  2 \lambda_{max}(P_i)\|x_i\|_2 \label{ineq_grad2}
\end{align}
In light of inequalities \eqref{ineq_inter} and \eqref{ineq_grad2}. we can express \eqref{Vdot_ineq} as
\begin{align}
    \dot{V}(x)&=\sum_{i\in\mathcal{N}}\dot{V}_i(x_i)\leq-\sum_{i\in\mathcal{N}}\lambda_{min}(Q_i)\|x_i\|_2^2\nonumber\\
    &+\sum_{i\in\mathcal{N}}2 \lambda_{max}(P_i)\|x_i\|_2\sum_{j\in\mathcal{N}_i} \sqrt{\lambda_{max}(A_{ij}^\top A_{ij})}\|x_j\|_2. \end{align}
    We can express this inequality in matrix form as
    \begin{align}
    \dot{V}(x)\leq-\frac{1}{2}\phi(x)^\top (S+S^\top)\phi(x) \end{align}
       where
    \begin{align}
    \phi(x):=\Big[\|x_{1}\|_2,...,\|x_{i}\|_2,...,\|x_{N}\|_2 \Big]^\top \in\mathbb{R}^{N}.
    \end{align}
    The matrix $S$ is defined as
    \begin{align}
    S:=\begin{cases} s_{ii}=\lambda_{min}(Q_i), & i=j \\
      s_{ij}=-2\lambda_{max}(P_i) \sqrt{\lambda_{max}(A_{ij}^\top A_{ij})}, & i\neq j,\;\; j\in\mathcal{N}_i\\
      s_{ij}=0, & \text{otherwise}\end{cases}
    \end{align}
 The matrix $(S+S^\top)$ is positive definite if and only if $S$ is an M-matrix. The definition of an M-matrix is given next.
    \begin{definition}
    An M-matrix is a matrix that has off-diagonal entries less than or equal to zero and eigenvalues whose real parts are nonnegative.\label{def_M_matrix}
    \end{definition}
    Given this definition, the following lemma ensues.
    \begin{lemma}[\cite{Khalil}]
    Diagonally dominant matrices with off-diagonal entries less than or equal to zero are M-matrices.\label{lemma_M_matrix}
    \end{lemma}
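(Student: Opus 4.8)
The plan is to verify the two defining properties of an M-matrix from Definition~\ref{def_M_matrix} directly. The first property, that the off-diagonal entries are nonpositive, holds by hypothesis, so the entire burden of the proof falls on showing that every eigenvalue has nonnegative real part. For this I would invoke the Gershgorin circle theorem, which localizes the spectrum in terms of the matrix entries and meshes naturally with the diagonal-dominance assumption.

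Concretely, let $A=[a_{ij}]$ denote such a matrix and write $R_i:=\sum_{j\neq i}|a_{ij}|$ for the $i$-th deleted absolute row sum. By Gershgorin's theorem, every eigenvalue $\lambda$ of $A$ lies in at least one of the closed discs $D_i=\{z\in\mathbb{C}:|z-a_{ii}|\leq R_i\}$. Diagonal dominance gives $|a_{ii}|\geq R_i$ for each $i$; since in the setting of interest the diagonal entries are positive (indeed $s_{ii}=\lambda_{min}(Q_i)>0$ because $Q_i\succ 0$), this reads $a_{ii}\geq R_i\geq 0$. Hence the leftmost point of each disc, $a_{ii}-R_i$, is nonnegative, so $D_i$ is contained in the closed right half-plane $\{z:\mathrm{Re}(z)\geq 0\}$. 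Taking the union over $i$, the spectrum of $A$ lies in the closed right half-plane, i.e. every eigenvalue has nonnegative real part. Together with the nonpositive off-diagonal structure, this is exactly the definition of an M-matrix, which completes the argument.

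The main obstacle is less about the inequality chain, which is routine once Gershgorin is in hand, and more about pinning down the sign convention on the diagonal. The clean half-plane containment requires $a_{ii}\geq 0$, not merely $|a_{ii}|\geq R_i$; I would therefore either fold positivity of the diagonal into the notion of diagonal dominance used here (as is standard in this M-matrix context and consistent with $s_{ii}=\lambda_{min}(Q_i)>0$), or note that the hypothesis is understood to include it. A minor robustness point worth recording is that Gershgorin applies equally to $A^\top$, so the identical conclusion follows from column dominance; this makes the result insensitive to whether row or column sums are used, which is convenient when the specific matrix $S$ in the application is not symmetric.
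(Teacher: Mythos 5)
Your proposal is correct. The paper itself supplies no proof of this lemma---it is stated with a citation to Khalil---and the Gershgorin-disc argument you give is precisely the standard one that the cited reference relies on, so you have in effect filled in the argument the paper delegates. Your caveat about the sign of the diagonal is also well taken and not merely cosmetic: as literally stated, ``diagonally dominant with nonpositive off-diagonal entries'' does not force $a_{ii}\geq 0$ (a negative diagonal entry can still dominate its row), and without that the Gershgorin discs could sit in the left half-plane; folding nonnegativity of the diagonal into the diagonal-dominance convention, as you do and as is consistent with $s_{ii}=\lambda_{min}(Q_i)>0$ in the application, is exactly the right fix.
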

     Next, we restate a Theorem from \cite{Siljak_dec}, \cite{Siljak_large_scale} that is useful in establishing stability of the interconnected system \eqref{interconnected_sys}.
    \begin{theorem}[\cite{Siljak_dec}, \cite{Siljak_large_scale}]
    The interconnected system $\Sigma$ is asymptotically stable if $S$ is an M-matrix. \label{Theorem1}
    \end{theorem}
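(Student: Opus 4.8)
The plan is to use the composite function $V(x)=\sum_{i\in\mathcal{N}}V_i(x_i)$ as a single Lyapunov certificate for the full interconnected system $\Sigma$, and to show that $S$ being an M-matrix is exactly what forces $\dot V$ to be negative definite. First I would record that $V$ is a legitimate candidate: since each $P_i\succ 0$, the sum $V(x)=\sum_i x_i^\top P_i x_i$ is positive definite and radially unbounded in the stacked state $x\in\mathbb{R}^{\overline{n}}$, vanishing only at $x=0_{\overline{n}}$. All the work needed to bound its derivative has already been assembled in the excerpt, culminating in the differential inequality $\dot V(x)\le-\tfrac{1}{2}\phi(x)^\top(S+S^\top)\phi(x)$ with $\phi(x)=[\|x_1\|_2,\dots,\|x_N\|_2]^\top$, so I would simply invoke this bound rather than rederive it.

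Second, I would connect the sign of the right-hand side to the M-matrix hypothesis. By the property quoted just before the theorem, $(S+S^\top)$ is positive definite precisely when $S$ is an M-matrix, and Lemma~\ref{lemma_M_matrix} supplies a convenient sufficient condition (diagonal dominance of the Z-matrix $S$) that each agent can verify locally. Assuming $S$ is an M-matrix, we obtain $(S+S^\top)\succ 0$, hence $\phi^\top(S+S^\top)\phi>0$ for every nonzero $\phi$.

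Third, I would close the argument by relating $\phi(x)$ back to $x$. The key observation is that $\phi(x)=0_N$ if and only if $\|x_i\|_2=0$ for all $i$, i.e.\ if and only if $x=0_{\overline{n}}$. Therefore, for any $x\neq 0_{\overline{n}}$ the vector $\phi(x)$ is nonzero (and lies in the nonnegative orthant), so positive definiteness of $(S+S^\top)$ yields $\dot V(x)\le-\tfrac{1}{2}\phi(x)^\top(S+S^\top)\phi(x)<0$. With $V$ positive definite, radially unbounded, and $\dot V$ negative definite on $\mathbb{R}^{\overline{n}}\setminus 0_{\overline{n}}$, Lyapunov's direct method delivers asymptotic stability of $\Sigma$ (indeed global, and exponential, since $\Sigma$ is linear).

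The step I expect to carry the real weight is the middle one, the implication that $S$ being an M-matrix gives $(S+S^\top)\succ 0$. Positive definiteness of the symmetric part is a genuinely stronger demand than positive stability of $S$ alone, so this is the place where the Z-matrix structure (nonpositive off-diagonals together with the diagonal-dominance margin in which $\lambda_{\min}(Q_i)$ must dominate the coupling terms $2\lambda_{\max}(P_i)\sqrt{\lambda_{\max}(A_{ij}^\top A_{ij})}$) must actually be exploited; the other two steps are essentially bookkeeping. As a fallback that sidesteps symmetric positive definiteness entirely, I would keep in reserve the classical construction of \cite{Siljak_dec}, \cite{Siljak_large_scale}: an M-matrix $S$ admits a positive vector $d$ with $S^\top d>0$ componentwise, and one then works with the \emph{weighted} composite function $V(x)=\sum_i d_i V_i(x_i)$, whose aggregate comparison system is governed by $-S^\top d$, yielding the same conclusion under the bare M-matrix hypothesis.
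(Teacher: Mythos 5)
The paper offers no proof of Theorem~\ref{Theorem1}; it is imported verbatim from \cite{Siljak_dec}, \cite{Siljak_large_scale}. Your primary route reproduces the paper's implicit reasoning --- the bound $\dot V(x)\le-\tfrac12\phi(x)^\top(S+S^\top)\phi(x)$ followed by the assertion that ``$(S+S^\top)$ is positive definite if and only if $S$ is an M-matrix'' --- and you are right to single out that implication as the load-bearing step, because it is in fact false. Positive stability of a Z-matrix does not control its symmetric part: take
\begin{equation*}
S=\begin{pmatrix}1 & -0.9\\ -5 & 6\end{pmatrix},
\end{equation*}
which has nonpositive off-diagonal entries, is row diagonally dominant, and has both eigenvalues in the open right half-plane (trace $7$, determinant $1.5$), hence is a nonsingular M-matrix meeting even the hypotheses of Theorem~\ref{Theorem_dist_cond}; yet $\det(S+S^\top)=24-5.9^{2}<0$, so $S+S^\top$ is indefinite. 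The asymmetry $|s_{ij}|\neq|s_{ji}|$ is generic for the paper's $S$, since $s_{ij}$ carries the weight $\lambda_{max}(P_i)$ while $s_{ji}$ carries $\lambda_{max}(P_j)$. Consequently the unweighted sum $V=\sum_i V_i$ cannot in general be certified by this quadratic form, and your step two cannot be repaired as stated.

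The construction you keep ``in reserve'' is therefore not a fallback but the actual proof, and it is Siljak's: a nonsingular M-matrix is Lyapunov \emph{diagonally} stable, i.e.\ there exists a positive diagonal $D=\mathrm{diag}(d_1,\dots,d_N)$ with $DS+S^\top D\succ 0$ (for Z-matrices this is equivalent to the existence of $d>0$ with $S^\top d>0$, the comparison-system formulation you mention). Repeating the paper's computation with $V(x)=\sum_i d_i V_i(x_i)$ yields $\dot V\le-\tfrac12\phi(x)^\top(DS+S^\top D)\phi(x)$, after which your first and third steps (positive definiteness and radial unboundedness of $V$, and $\phi(x)=0_N\iff x=0_{\overline n}$) close the argument unchanged; you should promote this to the main line. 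Two further cautions: Definition~\ref{def_M_matrix} requires only \emph{nonnegative} real parts, which admits singular M-matrices for which $DS+S^\top D$ is merely positive semidefinite and asymptotic stability can fail, so the theorem needs the nonsingular case; and the counterexample above shows that the row diagonal dominance supplied by Lemma~\ref{lemma_M_matrix} does not rescue the unweighted argument either --- the detour through diagonal stability is unavoidable.
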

     In light of definition~\ref{def_M_matrix} and Lemma~\ref{lemma_M_matrix}, we now state a theorem that comprises a distributed condition, that can be examined locally by each subsystem, resulting in stability of the interconnected system $\Sigma$.
    \begin{theorem}
    The interconnected system $\Sigma$ is asymptotically stable if 
    \begin{align}
       &|\lambda_{min}(Q_i)|>\sum_{j\in\mathcal{N}_i}|-2\lambda_{max}(P_i) \sqrt{\lambda_{max}(A_{ij}^\top A_{ij})}|,\hspace{3mm} \forall i\in\mathcal{N} \label{condition1}
    \end{align}
holds in addition to $A_i^\top P_i+P_iA_i=-Q_i$ where $P_i, Q_i\succ 0$ $\forall i\in\mathcal{N}$. 
\label{Theorem_dist_cond}
    \end{theorem}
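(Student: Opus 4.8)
The plan is to show that condition \eqref{condition1} is nothing other than strict row diagonal dominance of the comparison matrix $S$, and then to chain Lemma~\ref{lemma_M_matrix} and Theorem~\ref{Theorem1} to conclude asymptotic stability. The derivation preceding the statement already reduces the problem to establishing that $S$ is an M-matrix, since it produces the bound $\dot V(x)\le -\tfrac12\phi(x)^\top(S+S^\top)\phi(x)$ together with the equivalence (quoted just before Definition~\ref{def_M_matrix}) that $S+S^\top\succ 0$ holds precisely when $S$ is an M-matrix. Hence the entire remaining task is to verify the hypotheses of Lemma~\ref{lemma_M_matrix} under \eqref{condition1}.

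First I would check the sign of the off-diagonal entries. Because $P_i\succ 0$ we have $\lambda_{max}(P_i)>0$, and $\sqrt{\lambda_{max}(A_{ij}^\top A_{ij})}\ge 0$ always, so each $s_{ij}=-2\lambda_{max}(P_i)\sqrt{\lambda_{max}(A_{ij}^\top A_{ij})}\le 0$ for $j\in\mathcal{N}_i$, while the remaining off-diagonal entries vanish by definition. Thus $S$ meets the nonpositive off-diagonal requirement of Definition~\ref{def_M_matrix}. Next, since $Q_i\succ 0$ we have $\lambda_{min}(Q_i)>0$, so the diagonal entry is positive and $|\lambda_{min}(Q_i)|=\lambda_{min}(Q_i)=s_{ii}$; likewise $|s_{ij}|=2\lambda_{max}(P_i)\sqrt{\lambda_{max}(A_{ij}^\top A_{ij})}$.

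With these identifications, condition \eqref{condition1} reads exactly $s_{ii}>\sum_{j\neq i}|s_{ij}|$ for every row $i$, i.e.\ $S$ is strictly row diagonally dominant with nonpositive off-diagonal entries. Lemma~\ref{lemma_M_matrix} then certifies that $S$ is an M-matrix, and Theorem~\ref{Theorem1} immediately delivers asymptotic stability of $\Sigma$. Equivalently, one may close the argument through the Lyapunov inequality itself: once $S$ is an M-matrix, the equivalence stated before Definition~\ref{def_M_matrix} gives $S+S^\top\succ 0$, so that $\dot V(x)\le -\tfrac12\phi(x)^\top(S+S^\top)\phi(x)<0$ for every $\phi(x)\neq 0$; since $\phi(x)=0$ iff $x=0$, the function $V(x)=\sum_{i\in\mathcal{N}}V_i(x_i)$ is a strict Lyapunov function for $\Sigma$.

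The argument is essentially bookkeeping once the pre-statement derivation of $S$ is in hand, so no deep obstacle remains; the single point that warrants care is the strictness of \eqref{condition1}. It is the strict inequality that promotes $S$ from a possibly singular M-matrix (as permitted by Definition~\ref{def_M_matrix}) to a nonsingular one, and hence secures $S+S^\top\succ 0$ rather than merely $S+S^\top\succeq 0$. This is precisely what distinguishes asymptotic stability from marginal stability, so the strictness must be tracked through every implication in the chain.
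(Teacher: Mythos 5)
Your proposal is correct and follows exactly the route the paper intends: its proof of this theorem is the one-line statement that it ``follows by combining Theorem~\ref{Theorem1} and Lemma~\ref{lemma_M_matrix},'' and you have simply written out the bookkeeping (sign of the off-diagonal entries of $S$, identification of \eqref{condition1} with strict row diagonal dominance) that the paper leaves implicit. No difference in approach, and your added remark on the role of strictness is a reasonable clarification rather than a deviation.
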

    \begin{proof}
    Follows by combining Theorem \ref{Theorem1} and Lemma~\ref{lemma_M_matrix}.
    \end{proof}
    Representative subsystem agents can probe whether the distributed stability condition of Theorem~\ref{Theorem_dist_cond} is satisfied by only having information about $P_i, Q_i$ and the interconnection matrices $A_{ij}$. In the case the agents cannot make sure that the distributed stability condition holds, they have two options. They can redesign their local controls, and through that manipulate the eigenvalues of the matrices $A_i$, and/or, implement a global control input, to minimize the effect of the interconnections $A_{ij}$, until the condition is finally satisfied. The system operator can then leverage the local stability guarantees to establish stability of the overall system. Hence, ensuring stability of the overall system is delegated to multiple agents with whom the  system operator communicates frequently.  It is important to note though that, although the distributed stability condition offers a computationally efficient way to establish stability of the overall system, it is only sufficient and not necessary.

\subsection{Relaxed Distributed Stability Condition}
The condition postulated in Theorem~\ref{Theorem_dist_cond} is only sufficient and can often be quite conservative. Realizing that, one can attempt to relax its conservativeness by choosing matrices $P_i$ and $Q_i$ that solve the following problem
\begin{align}
    \max_{P_i, Q_i} \hspace{3mm}&\frac{\lambda_{min}(Q_i)}{\lambda_{max}(P_i)}\nonumber\\
    \text{such that}\hspace{3mm}&A_i^\top P_i+P_iA_i=-Q_i.
\end{align}
In other words, choose $P_i, Q_i$ that lead to maximization of the ratio $\lambda_{min}(Q_i)/\lambda_{max}(P_i)$ under the restriction that $A_i^\top P_i+P_iA_i=-Q_i$. To solve this problem according to \cite{Siljak_dec}, \cite{Siljak_large_scale}, we can choose a non-singular matrix $T_i$ such that $\Lambda_i=T_i^{-1}A_iT_i$ is semi-simple, i.e., its diagonal is composed by, first of all the complex eigenvalues and then, all the real eigenvalues of $A_i$. Under such a transformation,  the interconnected system \eqref{interconnected_sys} can be recast to
\begin{equation}
\Sigma_{trans}: \dot{\tilde{x}}_i=\Lambda_i \tilde{x}_i+\sum_{j\in\mathcal{N}_i} \tilde{A}_{ij}\tilde{x}_j,\hspace{3mm} i\in\mathcal{N} \label{interconnected_sys_transf}
\end{equation}
where $\tilde{A}_{ij}=T_{i}^{-1}A_{ij}T_j$ and $x_i=T_i\tilde{x}_i$. 
\par By choosing $\tilde{P}_i=\theta \mathbb{I}_{n_i}$ where, $\theta$ is a positive constant and $\mathbb{I}_{n_i}$ the identity matrix, we obtain $\tilde{Q}_i=-2\theta\Lambda_i=2\theta \text{diag}\{\sigma_1^i,...,\sigma_{n_i}^i\}$ from $\Lambda_i^\top \tilde{P}_i+\tilde{P}_i\Lambda_i=-\tilde{Q}_i$. These choices result in the maximum ratio
\begin{align}
    \frac{\lambda_{min}(\tilde{Q}_i)}{\lambda_{max}(\tilde{P}_i)}=\frac{2\theta\min\{\sigma_1^{i},...,\sigma_{n_i}^{i}\}}{\theta}=2\sigma_{M}^i
\end{align}
where $-\sigma_M^i$ is the real part of the maximum eigenvalue of $A_i$. By considering $V(\tilde{x})=\sum_{i\in\mathcal{N}}\tilde{x}_i^\top \tilde{P}_i \tilde{x}_i$ we arrive at
\begin{align}
    \dot{V}(\tilde{x})&=\sum_{i\in\mathcal{N}}\dot{V}_i(\tilde{x}_i)\leq-\sum_{i\in\mathcal{N}}\lambda_{min}(\tilde{Q}_i)\|\tilde{x}_i\|_2^2\nonumber\\
    &+\sum_{i\in\mathcal{N}}\|\nabla V_i^\top\|_2\sum_{j\in\mathcal{N}_i} \|\tilde{A}_{ij}\tilde{x}_j\|_2. \label{Vdot_ineq2}
\end{align}
Given that
\begin{align}
    	 \|\tilde{A}_{ij}\tilde{x}_j\|_2&\leq \sqrt{\lambda_{max}(\tilde{A}_{ij}^\top \tilde{A}_{ij})}\|\tilde{x}_j\|_2 \label{ineq_inter2}
\end{align}
and that $\|\nabla V_i^\top\|_2$ is bounded as
\begin{align}
\|\nabla V_i^\top\|_2& \leq  2 \sigma_{max}(\tilde{P}_i)\|\tilde{x}_i\|_2 \leq 2\theta\|\tilde{x}_i\|_2 \label{ineq_grad_theta}
\end{align}
we can express \eqref{Vdot_ineq2} as
\begin{align}
    \dot{V}(x)&\leq-\sum_{i\in\mathcal{N}}2\theta \min\{\sigma_1^{i},...,\sigma_{n_i}^{i}\}\|\tilde{x}_i\|_2^2\nonumber\\
    &+\sum_{i\in\mathcal{N}}2 \theta \|\tilde{x}_i\|_2\sum_{j\in\mathcal{N}_i} \sqrt{\lambda_{max}(\tilde{A}_{ij}^\top \tilde{A}_{ij})}\|\tilde{x}_j\|_2. \end{align}
This inequality can be expressed in the matrix form
    \begin{align}
\dot{} \dot{V}(\tilde{x})\leq-\theta\tilde{\phi}(\tilde{x})^\top (\tilde{S}+\tilde{S}^\top)\tilde{\phi}(\tilde{x}) \end{align}
where 
 \begin{align}
    \tilde{\phi}(\tilde{x}):=\Big[\|\tilde{x}_{1}\|_2,...,\|\tilde{x}_{i}\|_2,...,\|\tilde{x}_{N}\|_2 \Big]^\top \in\mathbb{R}^{N}
    \end{align}
and $\tilde{S}$ is a new matrix defined as
 \begin{align}
    \tilde{S}=[\tilde{s}_{ij}]:=\begin{cases} \tilde{s}_{ii}=\sigma_M^i, & i=j \\
      \tilde{s}_{ij}=- \sqrt{\lambda_{max}(\tilde{A}_{ij}^\top \tilde{A}_{ij})},& i\neq j,\;\; j\in\mathcal{N}_i\\
       \tilde{s}_{ij}=0,& \text{otherwise}\end{cases}
    \end{align}
    The following theorem from \cite{Siljak_dec}, \cite{Siljak_large_scale} can be used to establish stability of the \textit{transformed} interconnected system \eqref{interconnected_sys_transf}.
    \begin{theorem}[\cite{Siljak_dec}, \cite{Siljak_large_scale}]
    The interconnected system $\Sigma_{trans}$ is asymptotically stable if $\tilde{S}$ is an M-matrix.\label{Theorem3}
    \end{theorem}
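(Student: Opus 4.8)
The plan is to mirror the construction already carried out in the excerpt for the untransformed system and to close it with Lyapunov's direct method. Most of the work is in fact done: I have at hand the positive definite, radially unbounded candidate $V(\tilde{x})=\theta\sum_{i\in\mathcal{N}}\|\tilde{x}_i\|_2^2$ together with the bound
\begin{align}
\dot{V}(\tilde{x})\leq-\theta\,\tilde{\phi}(\tilde{x})^\top(\tilde{S}+\tilde{S}^\top)\tilde{\phi}(\tilde{x}).\nonumber
\end{align}
It therefore suffices to certify that this upper bound is strictly negative for every $\tilde{x}\neq 0_{\overline{n}}$, which then yields asymptotic stability of $\Sigma_{trans}$.

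First I would reduce the problem to a property of $\tilde{S}$ alone. Since $\theta>0$ and $\tilde{\phi}(\tilde{x})=0_N$ if and only if $\tilde{x}=0_{\overline{n}}$, negativity of $\dot{V}$ on the punctured state space follows once $(\tilde{S}+\tilde{S}^\top)\succ 0$. Here I would invoke the equivalence recorded earlier in the excerpt, that the symmetrised matrix is positive definite precisely when the matrix itself is an M-matrix; this applies because $\tilde{S}$ is a Z-matrix, its off-diagonal entries being $-\sqrt{\lambda_{max}(\tilde{A}_{ij}^\top\tilde{A}_{ij})}\leq 0$. Under the hypothesis that $\tilde{S}$ is an M-matrix this gives $(\tilde{S}+\tilde{S}^\top)\succ 0$, hence $\dot{V}(\tilde{x})<0$ on $\mathbb{R}^{\overline{n}}\setminus 0_{\overline{n}}$.

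With $V$ positive definite and $\dot{V}$ negative definite, Lyapunov's theorem delivers global asymptotic stability of the origin of $\Sigma_{trans}$. Since $x_i=T_i\tilde{x}_i$ is a fixed nonsingular linear change of coordinates, asymptotic stability then transfers verbatim to $\Sigma$ in the original coordinates. I would note in passing that the diagonal entries $\tilde{s}_{ii}=\sigma_M^i$ are strictly positive precisely because each isolated subsystem \eqref{isolated_sys} is asymptotically stable, so $-\sigma_M^i$, the largest real part among the eigenvalues of $A_i$, is negative; this is what makes the Z-matrix structure compatible with the M-matrix hypothesis rather than vacuous.

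I expect the genuine obstacle to sit in the M-matrix step. For a general M-matrix the symmetric part $\tilde{S}+\tilde{S}^\top$ need not be positive definite; the sharp statement guarantees only a positive diagonal scaling $D=\text{diag}(d_1,\dots,d_N)\succ 0$ with $D\tilde{S}+\tilde{S}^\top D\succ 0$, equivalently the existence of a vector $d>0$ satisfying $\tilde{S}^\top d>0$ componentwise. The clean way to accommodate this, and the route I would ultimately take if the unweighted claim proves too strong, is to replace the aggregate Lyapunov function by the weighted one $V(\tilde{x})=\sum_{i\in\mathcal{N}}d_i\|\tilde{x}_i\|_2^2$ with the weights $d_i$ furnished by the M-matrix property. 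This is exactly the vector-Lyapunov construction underlying the cited results of Siljak; the same inequality chain then produces $\dot{V}(\tilde{x})\leq-\tilde{\phi}^\top(D\tilde{S}+\tilde{S}^\top D)\tilde{\phi}<0$, and the argument concludes as before.
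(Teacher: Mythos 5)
The paper does not actually prove Theorem~\ref{Theorem3}: it is imported verbatim from \cite{Siljak_dec}, \cite{Siljak_large_scale}, so there is no in-text argument to compare against. Your proof, in its final weighted form, is essentially the argument of those references and is correct. More importantly, your diagnosis of the obstacle is exactly right: $\tilde{S}$ is not symmetric in general (since $\sqrt{\lambda_{max}(\tilde{A}_{ij}^\top\tilde{A}_{ij})}\neq\sqrt{\lambda_{max}(\tilde{A}_{ji}^\top\tilde{A}_{ji})}$), and for a non-symmetric M-matrix the symmetric part need not be positive definite --- take $\tilde{s}_{11}=\tilde{s}_{22}=1$, $\tilde{s}_{12}=0$, $\tilde{s}_{21}=-10$, which satisfies Definition~\ref{def_M_matrix} yet has $\tilde{S}+\tilde{S}^\top$ with eigenvalues $12$ and $-8$. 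Hence the surrounding claim in the paper that $(S+S^\top)\succ 0$ ``if and only if'' $S$ is an M-matrix holds only for symmetric $S$, and the unweighted first half of your argument would indeed not close. The repair you propose --- a positive diagonal $D$ with $D\tilde{S}+\tilde{S}^\top D\succ 0$, guaranteed for nonsingular M-matrices, fed into the weighted Lyapunov function $V(\tilde{x})=\sum_{i\in\mathcal{N}} d_i\,\tilde{x}_i^\top\tilde{P}_i\tilde{x}_i$ --- is precisely Siljak's vector-Lyapunov construction and yields $\dot{V}\leq-\theta\,\tilde{\phi}^\top(D\tilde{S}+\tilde{S}^\top D)\tilde{\phi}<0$ away from the origin, after which Lyapunov's direct method and the nonsingular change of coordinates $x_i=T_i\tilde{x}_i$ finish the proof as you say. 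Two minor caveats: strict diagonal stability requires the M-matrix to be nonsingular (strictly positive real parts, slightly stronger than Definition~\ref{def_M_matrix} as written), and when $A_i$ has complex eigenvalues the ``diagonalizing'' $T_i$ must be understood as a real block-diagonalization so that the quadratic forms remain real; neither affects the substance of your argument.
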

   Next, we set forth a \textit{relaxed} distributed condition that translates into stability of the interconnected system $\Sigma_{trans}$.     \begin{theorem}
    The interconnected system $\Sigma_{trans}$ is asymptotically stable if 
    \begin{align}
       &|\sigma_M^i |>\sum_{j\in\mathcal{N}_i}|-\sqrt{\lambda_{max}(\tilde{A}_{ij}^\top \tilde{A}_{ij})}|,\hspace{3mm} \forall i\in\mathcal{N} \label{condition3}
    \end{align}
holds and $\Lambda_i^\top \tilde{P}_i+\tilde{P}_i\Lambda_i=-\tilde{Q}_i$ where $\tilde{P}_i, \tilde{Q}_i\succ 0$ $\forall i\in\mathcal{N}$.
\label{Theorem_dist_cond_transf}
    \end{theorem}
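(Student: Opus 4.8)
The plan is to recognize that hypothesis \eqref{condition3} is nothing more than the strict row diagonal-dominance condition for the matrix $\tilde{S}$, and then to invoke the machinery already assembled earlier in the excerpt. First I would rewrite \eqref{condition3} in terms of the entries of $\tilde{S}$: its diagonal entries are $\tilde{s}_{ii}=\sigma_M^i$ and its nonzero off-diagonal entries are $\tilde{s}_{ij}=-\sqrt{\lambda_{max}(\tilde{A}_{ij}^\top \tilde{A}_{ij})}$, so \eqref{condition3} reads exactly $|\tilde{s}_{ii}|>\sum_{j\neq i}|\tilde{s}_{ij}|$ for every $i\in\mathcal{N}$, which is precisely strict diagonal dominance.

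Next I would verify the two sign conditions needed for Lemma~\ref{lemma_M_matrix} to apply. The off-diagonal entries $-\sqrt{\lambda_{max}(\tilde{A}_{ij}^\top \tilde{A}_{ij})}\leq 0$ by construction, since the radicand is the largest eigenvalue of a Gram matrix and is therefore nonnegative. For the diagonal, I would use the standing assumption that each isolated subsystem \eqref{isolated_sys} is asymptotically stable: all eigenvalues of $A_i$ have strictly negative real part, hence $-\sigma_M^i=\mathrm{Re}(\lambda_{max}(A_i))<0$ and $\sigma_M^i>0$. Thus $|\sigma_M^i|=\sigma_M^i=\tilde{s}_{ii}$ and the diagonal of $\tilde{S}$ is positive.

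With these observations in place, $\tilde{S}$ has nonpositive off-diagonal entries and is strictly diagonally dominant with positive diagonal, so Lemma~\ref{lemma_M_matrix} certifies that $\tilde{S}$ is an M-matrix. Applying Theorem~\ref{Theorem3} then immediately yields asymptotic stability of $\Sigma_{trans}$, and since asymptotic stability is invariant under the similarity transformation $x_i=T_i\tilde{x}_i$, the same conclusion transfers back to the original interconnected system $\Sigma$.

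The main (and essentially only) obstacle is careful sign bookkeeping: one must confirm that $\sigma_M^i$, defined so that $-\sigma_M^i$ is the maximal real part of the spectrum of $A_i$, is positive precisely because of the decoupled-stability hypothesis, and that the off-diagonal terms are genuinely nonpositive so that Lemma~\ref{lemma_M_matrix} is truly applicable rather than merely giving dominance in absolute value. Everything else is a direct transcription of the argument already used for Theorem~\ref{Theorem_dist_cond}, now applied to the transformed data $(\Lambda_i,\tilde{A}_{ij},\tilde{P}_i,\tilde{Q}_i)$, so the proof collapses to the single sentence: it follows by combining Theorem~\ref{Theorem3} and Lemma~\ref{lemma_M_matrix}.
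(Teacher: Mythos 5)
Your proposal is correct and follows essentially the same route as the paper, whose proof is literally the one-line combination of Theorem~\ref{Theorem3} and Lemma~\ref{lemma_M_matrix}; you simply make explicit the sign bookkeeping (nonpositive off-diagonals, positive diagonal from decoupled stability, strict diagonal dominance) that the paper leaves implicit. No gaps.
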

      \begin{proof}
    Follows by combining Theorem \ref{Theorem3} and Lemma~\ref{lemma_M_matrix}.
    \end{proof}
   By examining Theorem~\ref{Theorem_dist_cond_transf}, one can easily notice that Condition~\eqref{condition3} is less conservative than Condition~\eqref{condition1}  as it enforces a more relaxed bound on the interconnection matrices. There might be cases though where the \textit{transformed} interconnection matrices $\tilde{A}_{ij}$ turn out to be greater than the original matrices $A_{ij}$, overcoming the gain in the size of the diagonal elements $\tilde{s}_{ii}$. In such cases, stability of the interconnected system better be studied using the original state-space.

\subsection{Stabilization}
The conditions stated in Theorems~\ref{Theorem_dist_cond} and \ref{Theorem_dist_cond_transf} are formulated with regard to the closed-loop systems \eqref{interconnected_sys} and \eqref{interconnected_sys_transf}. We now elaborate on the various ways the agents can design their controls to increase their chances of satisfying these conditions.  We depart from the following system form
\begin{align}
\dot{x}_i=\hat{A}_ix_i+\sum_{j\in\mathcal{N}_i}\hat{A}_{ij}x_j+B_iu_i \label{statespace_with_input}   
\end{align}
We consider the control input $u_i\in\mathbb{R}^{m_i}$
\begin{align}
    u_i&=u_{i}^l+u_{i}^g \label{controlinput2}
    \end{align}
constructed by the superposition of two control inputs
\begin{align}
    \text{(local input):\;} u_{i}^l&=-K_i^\top x_i,\;\;\;&&u_{i}^l\in\mathbb{R}^{m_i}\\
  \text{(global input):\;}  u_i^g&=-\sum_{j\in\mathcal{N}_i}K_{ij}^\top x_j,\;\;\;&&u_{i}^g\in\mathbb{R}^{m_i}
\end{align}
The local and global control inputs $u_i^l$ and $u_i^g$ can be properly designed to result in stability of the local decoupled subsystems and  minimization of their interconnection terms, respectively.  By assuming that each local subsystem is fully controllable with respect to the input $u_i$ and closing the loop via the control input \eqref{controlinput2}, we arrive at:
\begin{align}
\dot{x}_i=A_ix_i+\sum_{j\in\mathcal{N}_i}A_{ij}x_j
\end{align}
where $A_i=(\hat{A}_i-B_iK_i^\top)$ and $A_{ij}=(\hat{A}_{ij}-B_iK_{ij}^\top)$ are the closed-loop system matrices. As previously discussed, when one aims to design controls to meet Condition~\eqref{condition3}, he better use the transformed state-space representation. By employing a transformation matrix $T_i$, one can recast the system \eqref{statespace_with_input} to:
\begin{align}
\dot{\tilde{x}}_i=\tilde{A}_i^{'}\tilde{x}_i+\sum_{j\in\mathcal{N}_i}\tilde{A}^{'}_{ij}\tilde{x}_j+\tilde{B}_iu_i    
\end{align}
where
\begin{align}
    \tilde{A}_i^{'}&=T_i^{-1}\hat{A}_iT_i, \;\;\;   \tilde{A}_{ij}^{'}=T_i^{-1}\hat{A}_{ij}T_j,  \;\;\;\tilde{B}_i&=T_i^{-1}B_i
\end{align}
We consider the input $u_i$ as
\begin{align}
    u_i&=u_{i}^l+u_{i}^g \label{controlinput_trans2}\\
    u_{i}^l&=-\tilde{K}_i^\top \tilde{x}_i=-K_i^\top x_i\\
    u_i^g&=-\sum_{j\in\mathcal{N}_i}\tilde{K}_{ij}^\top \tilde{x}_j=-\sum_{j\in\mathcal{N}_i}K_{ij}^\top x_j
\end{align}
where the new control gains, $\tilde{K}_i, \tilde{K}_{ij}$, are associated with the original control gains, $K_i, K_{ij}$, through the relations
\begin{align}
    \tilde{K}_{i}^\top=K_{i}^\top T_i,\;\;\;\tilde{K}_{ij}^\top=K_{ij}^\top T_j
\end{align}
By closing the loop with the input $u_i$ given by \eqref{controlinput_trans2}, we end up with the \textit{closed-loop transformed} state-space model \eqref{interconnected_sys_transf}:
\begin{align}
\dot{\tilde{x}}_i=\Lambda_i \tilde{x}_i+\sum_{j\in\mathcal{N}_i}\tilde{A}_{ij}\tilde{x}_j \label{transformed_closed_loop}
\end{align}
where $\Lambda_i=\tilde{A}_i^{'}-\tilde{B}_i\tilde{K}_i^\top$ and $\tilde{A}_{ij}=\tilde{A}_{ij}^{'}-\tilde{B}_i\tilde{K}_{ij}^\top$. Standard pole placement techniques can be employed to find gains $\tilde{K}_i$ of the local input $u_i^l$ that result in \textit{Hurwitz} matrices $A_i$ and $\Lambda_i$. On the other hand,  the optimal control gains $\tilde{K}_{ij}$ of the global control input $u_i^g$ that yield minimization of interconnection terms i.e., gains that solve $\min_{\tilde{K}_{ij}} \|\tilde{A}_{ij}^{'}-\tilde{B}_i\tilde{K}_{ij}^\top\|$, can be  analytically computed as \cite{Siljak_dec}, \cite{Siljak_large_scale}
\vspace{-2mm}
\begin{align}
    \tilde{K}_{ij}^*=[(\tilde{B}_i^\top \tilde{B}_i)^{-1}\tilde{B}_i^\top \tilde{A}^{'}_{ij}]^\top \label{optimalglobalcontrol}
\end{align}
where $(\tilde{B}_i^\top \tilde{B}_i)^{-1}\tilde{B}_i^\top $ is the \textit{Moore-Penrose inverse} of $\tilde{B}_i$.
\par In the case  the agents cannot ensure that the distributed Condition \eqref{condition3} is satisfied, using only their local feedback control inputs $u_{i}^l$, they can resort to the global control inputs $u_i^g$. By exploiting these, they can minimize the effect of the interconnections, and that way increase their chances of satisfying the Condition \eqref{condition3}, at the expense of increased real-time information exchange. This is because implementation of the global control input $u_i^g$ mandates that the agents exchange with their neighbors, besides their transformation matrices $T_i$ and information about $A_{ij}$, their full state-space vector $x_i$.

   \subsection{Distributed Stability Analysis of  Power Grids}
 We now introduce our fully distributed algorithm for assessing small-signal stability of a power grid. \newline
\begin{algorithm}[H]
 \caption{Distributed Stability Assessment (DSA)}
\begin{enumerate}
 \item Each agent $i\in\mathcal{N}$ designs its local control input $u_i=u_i^{l}$ using the state-space representation \eqref{interconnected_sys} and computes the matrix $T_i$ that diagonalizes its local system matrix $A_i$ i.e., computes $T_i$ so that $\Lambda_i=T_i^{-1}A_iT_i$ holds.
      \item Each agent $i\in\mathcal{N}$ shares $T_i$ and information about interconnection term $A_{ij}$ with its neighbors $j\in\mathcal{N}_i$.
    \item Each agent $i\in\mathcal{N}$ uses measurements and information from its neighbors $j\in\mathcal{N}_i$ to reconstruct $\tilde{A}_{ij}$.
     
    \item Each agent $i\in\mathcal{N}$ checks whether \textit{Condition}~\eqref{condition3}  of \textit{Theorem}~\ref{Theorem_dist_cond_transf} is satisfied.
    \begin{itemize}
        \item     If the condition \textit{is met}, the agent $i$ broadcasts \textit{``Condition met"} to the system operator.
        \item If the conditions \textit{is not met}, the agent $i$ repeats the process from Step 1.
        \item After \textit{several unsuccessful} attempts of adjusting the local controls in order to satisfy \textit{Condition}~\eqref{condition3}, the agents can augment their control inputs with the global control inputs i.e., employ $u_i=u_i^{l}+u_i^{g}$,  and repeat Step 4.
    \end{itemize}
    \item Once the system operator receives the message \textit{``Condition met"} from all agents $i\in\mathcal{N}$ it notifies them that the system is certifiably stable and that the designed \textit{local} and \textit{global} controls can be implemented. 
\item Each agent $i\in\mathcal{N}$ who employs a global control $u_i^g$ in addition to a local control $u_i^l$, requests $x_j$ from its neighbors $j\in\mathcal{N}_i$  in real-time.
\end{enumerate}
\end{algorithm}
In the case the agents are unable to compute $\tilde{A}_{ij}$ because of lack of information, they can use worst-case bounds to carry out the stability assessment. The main steps involved in the implementation of Algorithm 1 are illustrated in Fig. \ref{dsa_algorithm_1} and \ref{dsa_algorithm_2}.

\begin{figure}
\begin{center}
\includegraphics[width=0.51\textwidth]{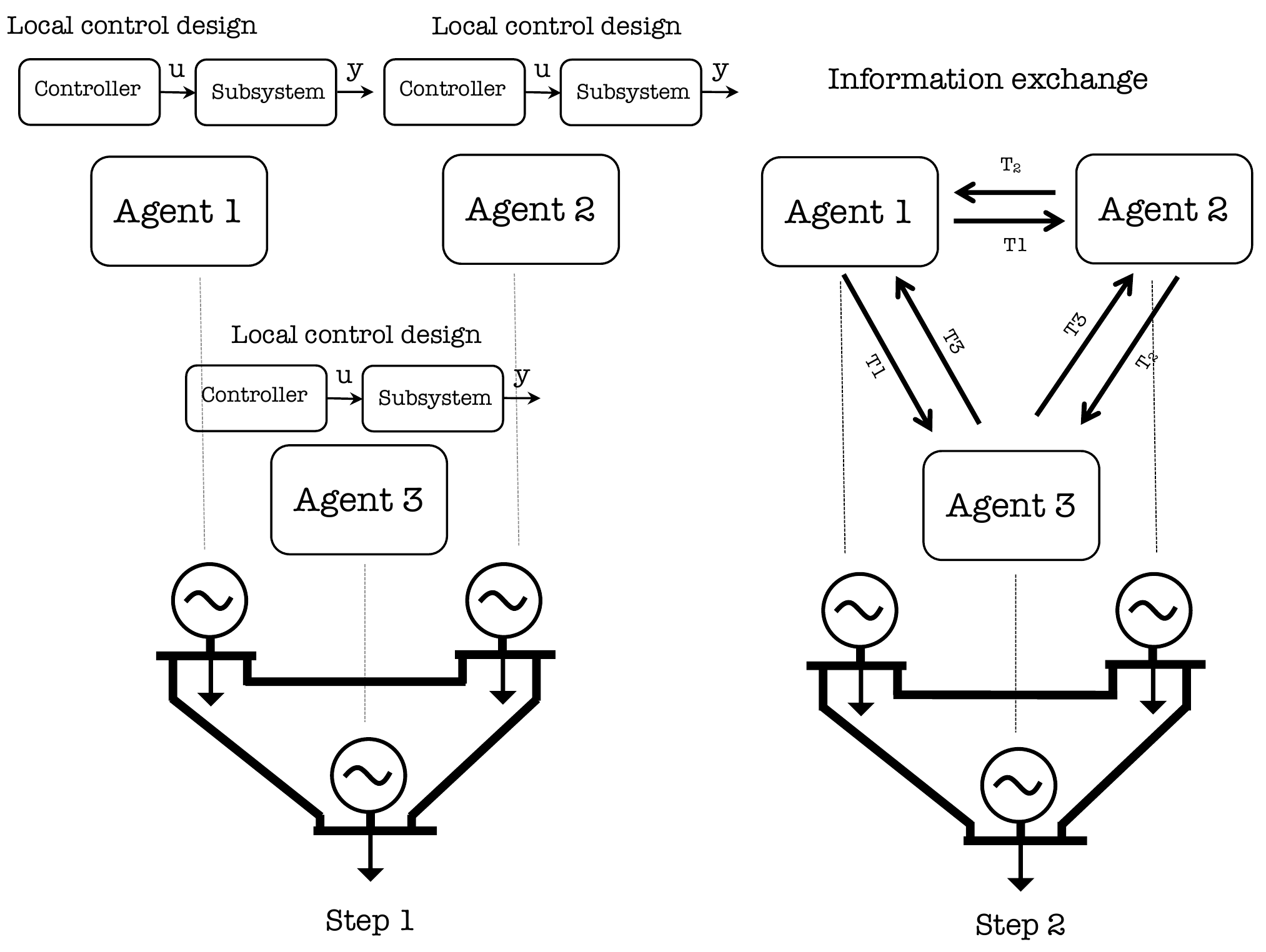}
\caption{Implementation of DSA algorithm, Steps 1 and 2.}
\label{dsa_algorithm_1}
\end{center}
\end{figure}
\begin{figure}
\begin{center}
\includegraphics[width=0.51\textwidth]{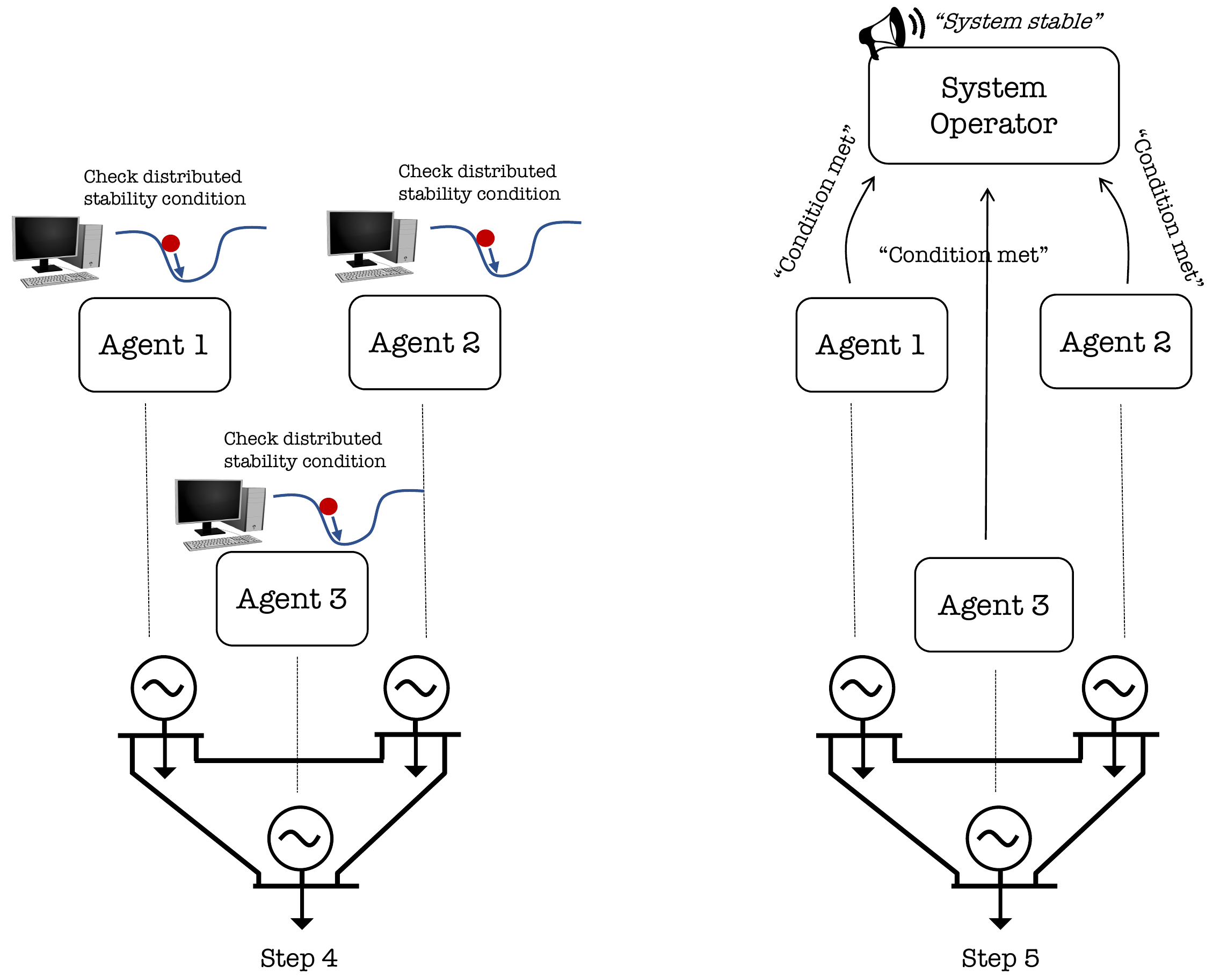}
\caption{Implementation of DSA algorithm, Steps 4 and 5.}
\label{dsa_algorithm_2}
\end{center}
\end{figure}
\begin{figure}
\begin{center}
\includegraphics[width=0.30\textwidth]{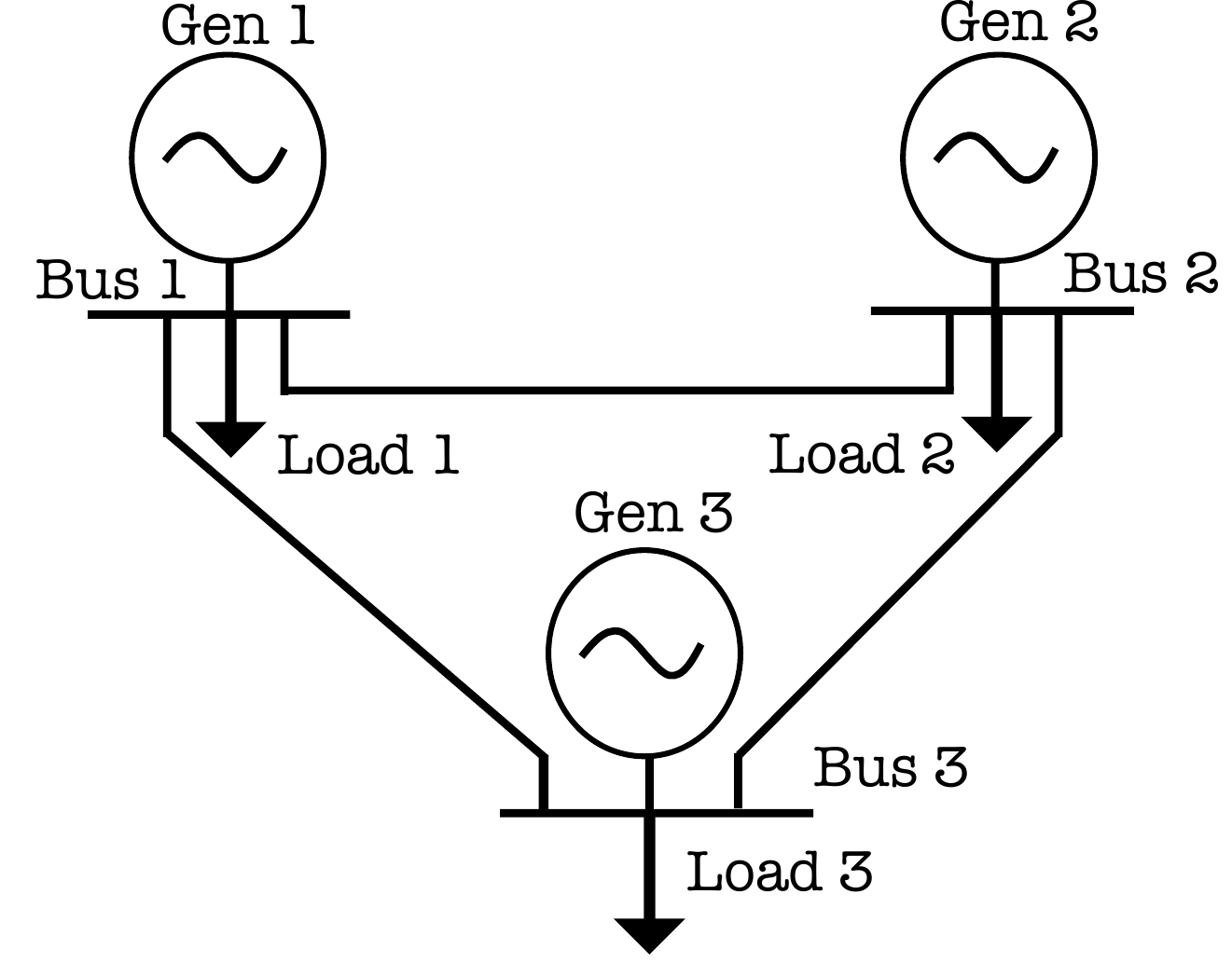}
\caption{Three-bus power grid.}
\label{three_bus_sys}
\end{center}
\end{figure}

\section{Illustrative Example}
In this section we focus on practical implementation and illustrate how a system operator can employ our proposed DSA algorithm to certify stability of an interconnected power grid by combining the \textit{local} stability certificates established by various agents. To carry out that, we leverage a detailed analysis on the three-bus power grid shown in Fig.~\ref{three_bus_sys}. We treat loads as fixed components and generators as dynamic ones. We consider a first-order turbine model and represent each generator via the interwined, linearized swing and turbine dynamics, expressed in \textit{per-unit linear} state-space form, as
\begin{align}
\Delta \dot{\delta}_i&=\Delta \omega_i \label{genmodel1}\\
\Delta \dot{\omega}_i&=\frac{\omega_b}{M_i}\Big(-\frac{D_i}{\omega_b}\Delta \omega_i +\Delta P_{m,i}-\Delta P_{L,i}\nonumber \\
&- \sum_{j\in\mathcal{N}_i}\frac{1}{X_{ij}}(\Delta \delta_i -\Delta\delta_j)\Big)\\
\Delta \dot{P}_{m,i}&=-\frac{\Delta P_{m,i}}{T_{T,i}}+\frac{u_{i}}{T_{T,i}}\label{genmodel3}
\end{align}
where $\omega_b=2\pi 60 \;rad/s$. The generator's control input $u_i\in\mathbb{R}$ does not follow the typical droop control law but is a free variable that the agents can adjust in the stabilization process. The state-variables of the generators $\Delta \delta_i, \Delta \omega_i,\Delta P_{m,i}\in\mathbb{R}$ are  defined as the deviations of the voltage angle, rotor speed and mechanical power from their equilibrium values. The state $\Delta \delta_i$ is given in radians, $\Delta \omega_i$ in rad/s and $\Delta P_{m,i}$ in per unit. The constant $\Delta P_{L,i}$ represents the load variation given also in per unit values. The terms $M_i, D_i, T_{T,i}, X_{ij}$ represent the inertia constant given in seconds, damping ratio in per-unit, turbine time-constant in seconds and line reactances in per unit, respectively. The equations \eqref{genmodel1}-\eqref{genmodel3} can be expressed in the standard state-space form
\begin{align}
\dot{x}_i=\hat{A}_ix_i+\sum_{j\in\mathcal{N}_i}\hat{A}_{ij}x_j+F_id_i+B_iu_i    
\end{align}
where the state-space vector, disturbance input and control input are given respectively, by
\begin{align}
    x_i&=[\Delta \delta_i, \Delta \omega_i, \Delta P_{m,i}]^\top\in\mathbb{R}^3, &&  \forall i\in\mathcal{N}\nonumber\\
    d_i&=\Delta P_{L,i}\in\mathbb{R},&& \forall i\in\mathcal{N}\nonumber\\
    u_i&\in\mathbb{R},&& \forall i\in\mathcal{N}.\nonumber
   \end{align}
The matrices $\hat{A}_i$ describing the local dynamics are defined as
\begin{align}
\hat{A}_i=\begin{bmatrix} 0 & 1 & 0 \\
\frac{\omega_b}{M_i}\sum_{j\in\mathcal{N}_i}(-\frac{1}{X_{ij}}) & -\frac{D_i}{M_i} & \frac{\omega_b}{M_i} \\
0 & 0 & -\frac{1}{T_{T,i}}\end{bmatrix}\in\mathbb{R}^{3\times 3}
\end{align}
while the matrices $F_i$ and $B_i$ are defined as
\begin{align}
F_i=\begin{bmatrix} 0 \\
-\frac{\omega_b}{M_i}\\
0\end{bmatrix}\in\mathbb{R}^3,\;\; B_i=\begin{bmatrix} 0 \\
0\\
\frac{1}{T_{T,i}}\end{bmatrix}\in\mathbb{R}^3.
\end{align}
The matrices $\hat{A}_{ij}$ characterizing the interconnections are
\begin{align}
\hat{A}_{ij}=\begin{bmatrix} 0 & 0 & 0 \\
\frac{\omega_b}{M_i}\frac{1}{X_{ij}} & 0 & 0 \\
0 & 0 & 0 \end{bmatrix}\in\mathbb{R}^{3\times 3}.
\end{align}
Without  loss of generality, we let
\begin{align}
d_i=0 
\end{align}
and consider the control input $u_i$ as
\begin{align}
    u_i&=u_{i}^l+u_{i}^g \label{controlinput}\\
    u_{i}^l&=-K_i^\top x_i\\
    u_i^g&=-\sum_{j\in\mathcal{N}_i}K_{ij}^\top x_j
\end{align}
where $u_i^l$ is the local feedback and $u_i^g$ is the global feedback that have to be designed so as to result in guaranteed stability of the decoupled subsystems and minimization of the interconnection terms, respectively.  One can easily verify that each local subsystem is fully controllable with respect to the input $u_i$. By closing the loop using the control input \eqref{controlinput}, we obtain the state-space model:
\begin{align}
\dot{x}_i=A_ix_i+\sum_{j\in\mathcal{N}_i}A_{ij}x_j
\end{align}
where $A_i=(\hat{A}_i-B_iK_i^\top)$ and $A_{ij}=(\hat{A}_{ij}-B_iK_{ij}^\top)$ are the closed-loop system matrices. This model has the form of system \eqref{interconnected_sys}. As mentioned before, when the analysis involves the condition of Theorem~\ref{Theorem_dist_cond_transf}, it may be advantageous to employ the transformed state-space representation. In closed-loop, the transformed state-space model appears as
\begin{align}
\dot{\tilde{x}}_i=\Lambda_i\tilde{x}_i+\sum_{j\in\mathcal{N}_i}\tilde{A}_{ij}\tilde{x}_j
\end{align}
where
\begin{align}
\tilde{B}_i&=T_i^{-1}B_i,\;\;    \tilde{K}_{i}^\top=K_{i}^\top T_i,\;\; \tilde{K}_{ij}^\top=K_{ij}^\top T_j\\
\Lambda_i&=T_i^{-1}\hat{A}_iT_i-\tilde{B}_i\tilde{K}_i^\top\\
\tilde{A}_{ij}&=T_i^{-1}\hat{A}_{ij}T_j-\tilde{B}_i\tilde{K}_{ij}^\top
\end{align}
The local feedback control input $u_{i}^l$ and correspondingly the control gains $\tilde{K}_i$ can be designed using standard pole-placement techniques to ensure that the matrix $A_i$, and correspondingly the matrix $\Lambda_i$, is Hurwitz. On the other hand,  the \textit{optimal} global control input $u_i^g$ that minimizes the effect of interconnections is given by \eqref{optimalglobalcontrol}.

 In our example, we consider the following set of parameters associated with the generators and transmission lines.
\begin{center}
\captionof{table}{Generator parameters.}
\begin{tabular}{ |c|c|c|c|c|c| } 
\hline
Generator &  $T_T$ & $M$ & $D$  \\ 
 \hline
 1  & 0.9 & 8 & 1  \\ \hline
 2  &  1 & 12 & 1  \\ \hline
  3  & 1.1 & 10 & 1 \\ 
 \hline
\end{tabular}
\end{center}
\vspace{1mm}
\begin{center}
\captionof{table}{Line parameters.}
\begin{tabular}{ |c|c| } 
 \hline
 Line & Reactance X  \\ 
 \hline
 1-2 & 0.4  \\ \hline
 1-3 & 0.5  \\ \hline
  2-3 & 0.6  \\
\hline
\end{tabular}
\end{center}
We initiate the DSA algorithm with each agent $i$ acting upon its local control system, adjusting the local control gains $K_{i}$ until stability of its local subsystem is warranted i.e., until $\lambda(A_i)<0$. In our scenario, we assume that the agents independently choose the following sets of eigenvalues
\begin{align}
\lambda_1&=\begin{bmatrix}-22 & -39 & -43  \end{bmatrix} ^\top   \\
\lambda_2&=\begin{bmatrix}-24 & -43 & -37  \end{bmatrix} ^\top   \\
\lambda_3&=\begin{bmatrix}-25 & -38 & -42  \end{bmatrix} ^\top   
\end{align}
This tuning is  performed in parallel by the agents giving rise to the control gains collected as follows
\begin{align}
K_1&=\begin{bmatrix}350.51 & 76.77 & 114.18  \end{bmatrix} ^\top   \\
K_2&=\begin{bmatrix}782.42 & 107.31 & 102.92  \end{bmatrix} ^\top   \\
K_3&=\begin{bmatrix}612.47 & 83.13 & 94.54  \end{bmatrix} ^\top   
\end{align}
Each agent $i$ then computes the transformation matrix $T_i$ that diagonalizes its local closed-loop subsystem matrix $A_i$ and share it with its neighbors $j$, $j\in\mathcal{N}_i$. 
The agents then proceed to compute the interconnection matrices $\tilde{A}_{ij}$ and assess feasibility of Condition~\eqref{condition3}. They carry out this by computing the elements of the matrix $\tilde{S}$ as follows.
\begin{align}
\text{\textbf{Agent 1}}:\; &\tilde{s}_{11}=22,\; \tilde{s}_{12}=296.58,\; \tilde{s}_{13}=249.13\nonumber\\
& |\tilde{s}_{11}|>|\tilde{s}_{12}|+|\tilde{s}_{13}| \Rightarrow \text{"\textit{Condition not met}"}\nonumber\\
\text{\textbf{Agent 2}}:\; &\tilde{s}_{22}=24,\; \tilde{s}_{21}=236.70,\; \tilde{s}_{23}=135.88\nonumber\\
& |\tilde{s}_{22}|>|\tilde{s}_{21}|+|\tilde{s}_{23}| \Rightarrow \text{"\textit{Condition not met}"}\nonumber\\
\text{\textbf{Agent 3}}:\; &\tilde{s}_{33}=25,\; \tilde{s}_{31}=325.07,\; \tilde{s}_{32}=222.14\nonumber\\
& |\tilde{s}_{33}|>|\tilde{s}_{32}|+|\tilde{s}_{31}| \Rightarrow \text{"\textit{Condition not met}"}\nonumber
\end{align}
The agents realize that the distributed stability condition~\eqref{condition3}  of Theorem~\ref{Theorem_dist_cond_transf} cannot be satisfied with reasonably high local control gains and they augment their controls with global control inputs in order to minimize the interconnection terms $\tilde{A}_{ij}$. By deploying the transformation matrices $T_j$ communicated by their neighbors, they compute the optimal control gains using formula \eqref{optimalglobalcontrol} as:
\begin{align}
\tilde{K}_{12}&=\begin{bmatrix}5.33 & -2.91 & -2.30  \end{bmatrix} ^\top   \\
\tilde{K}_{13}&=\begin{bmatrix}4.36 & 2.48 & 2.14  \end{bmatrix} ^\top   \\
\tilde{K}_{21}&=\begin{bmatrix}6.10 & -3.02 & -2.64  \end{bmatrix} ^\top   \\
\tilde{K}_{23}&=\begin{bmatrix} 3.35 &   1.90 &   1.64 \end{bmatrix} ^\top   \\
\tilde{K}_{31}&=\begin{bmatrix} 4.48 & -2.22 & -1.94  \end{bmatrix} ^\top   \\
\tilde{K}_{32}&=\begin{bmatrix} 3.01 & -1.64 &  -1.30  \end{bmatrix} ^\top  
\end{align}
Upon implementing the global inputs $u_i^g$ that minimize $\tilde{A}_{ij}$, they obtain the new elements of the matrix $\tilde{S}$ as follows:
\begin{align}
\text{\textbf{Agent 1}}:\; &\tilde{s}_{11}=22,\; \tilde{s}_{12}=11.15,\; \tilde{s}_{13}=9.37\nonumber\\
& |\tilde{s}_{11}|>|\tilde{s}_{12}|+|\tilde{s}_{13}| \Rightarrow \text{"\textit{Condition met}"}\nonumber\\
\text{\textbf{Agent 2}}:\; &\tilde{s}_{22}=24,\; \tilde{s}_{21}=13,\; \tilde{s}_{23}=7.46\nonumber\\
& |\tilde{s}_{22}|>|\tilde{s}_{21}|+|\tilde{s}_{23}| \Rightarrow \text{"\textit{Condition met}"}\nonumber\\
\text{\textbf{Agent 3}}:\; &\tilde{s}_{33}=25,\; \tilde{s}_{31}=12.23,\; \tilde{s}_{32}=8.36\nonumber\\
& |\tilde{s}_{33}|>|\tilde{s}_{32}|+|\tilde{s}_{31}| \Rightarrow \text{"\textit{Condition met}"}\nonumber
\end{align}
All agents then broadcast to the system operator the message \textit{``Condition met"} to inform him that they verified the distributed stability condition. Subsequently, the system operator invokes Theorem ~\ref{Theorem_dist_cond_transf} and affirms asymptotic stability of the interconnected system $\Sigma_{trans}$, which translates of course into asymptotic stability of the original system $\Sigma$. The eigenvalues of the overall closed-loop system are depicted in Fig.~\ref{fig:eig_Afull}.

\begin{figure}[h!]
    \centering
    \includegraphics[scale=0.42]{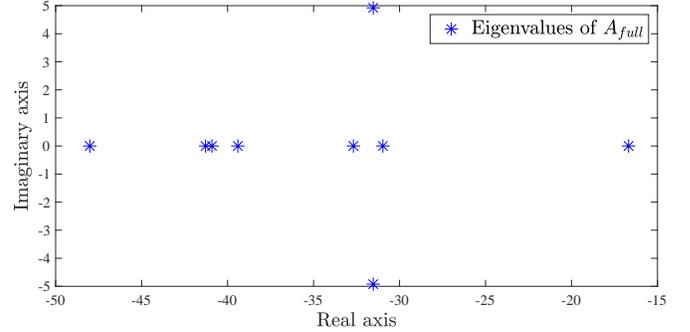}{}
    \caption{Eigenvalues of the overall closed-loop system matrix $A_{full}$.}
    \label{fig:eig_Afull}
    \end{figure}
    
    \begin{figure}[h!]
    \begin{center}
    \begin{subfigure}[b]{0.56\textwidth}
    \includegraphics[scale=0.42]{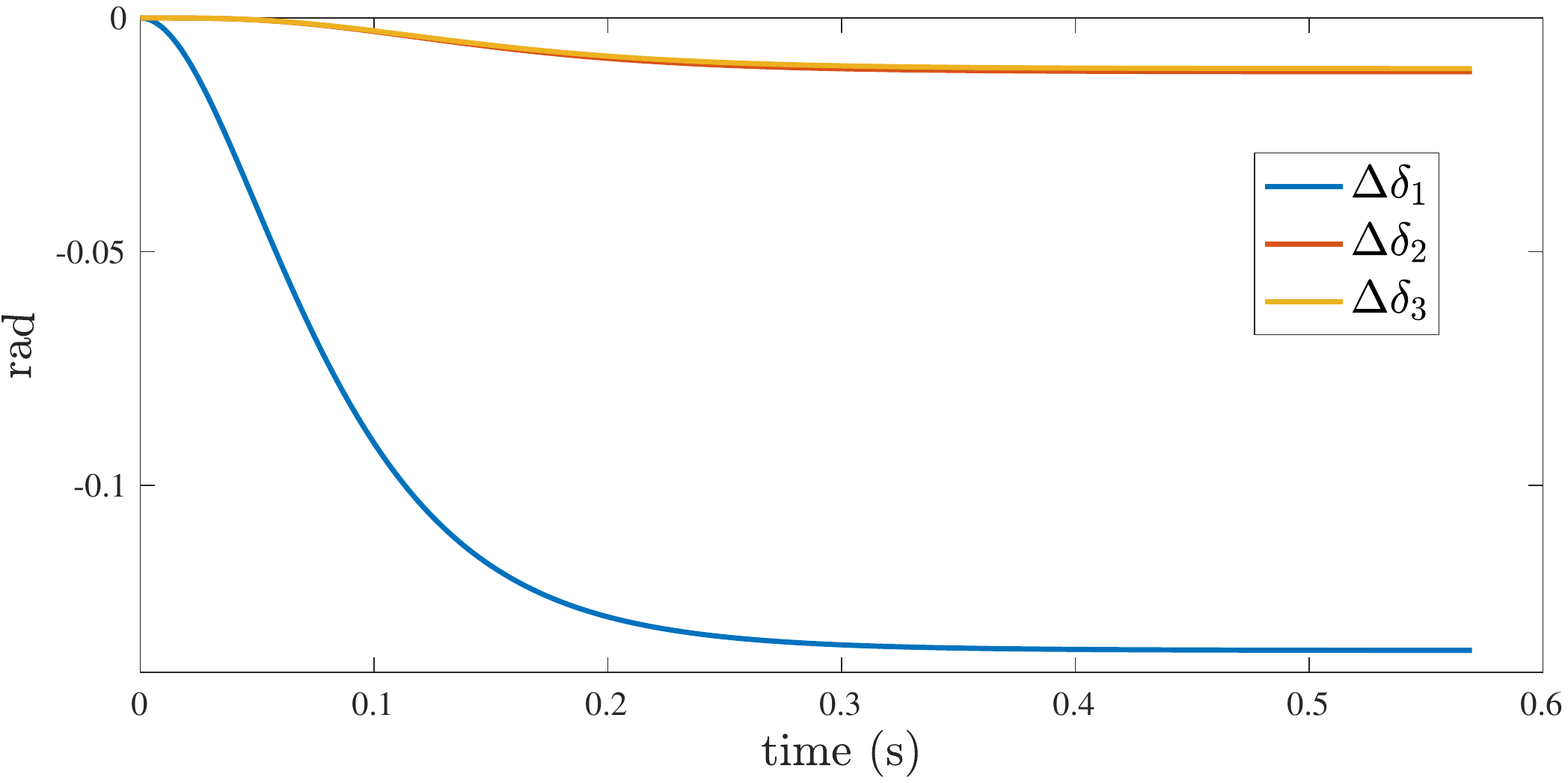}{}
    \caption{Voltage angles $\Delta \delta_i$.}
    \label{fig:Dd}
    \end{subfigure}

    \begin{subfigure}[b]{0.56\textwidth}
    \includegraphics[scale=0.42]{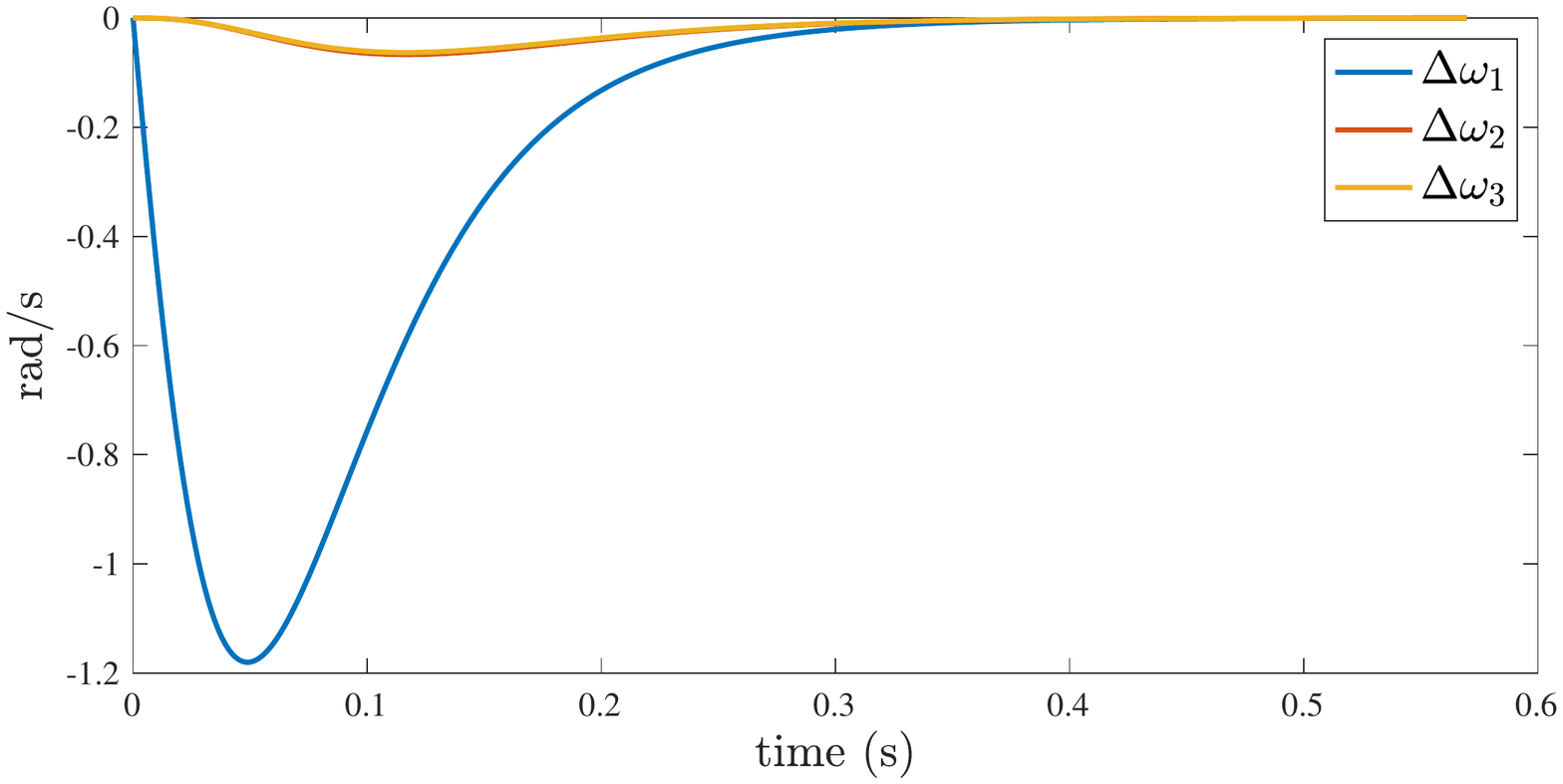}{}
    \caption{Rotor speeds $\Delta\omega_i$.}
    \label{fig:Dw}
    \end{subfigure}

    \begin{subfigure}[b]{0.56\textwidth}
    \includegraphics[scale=0.42]{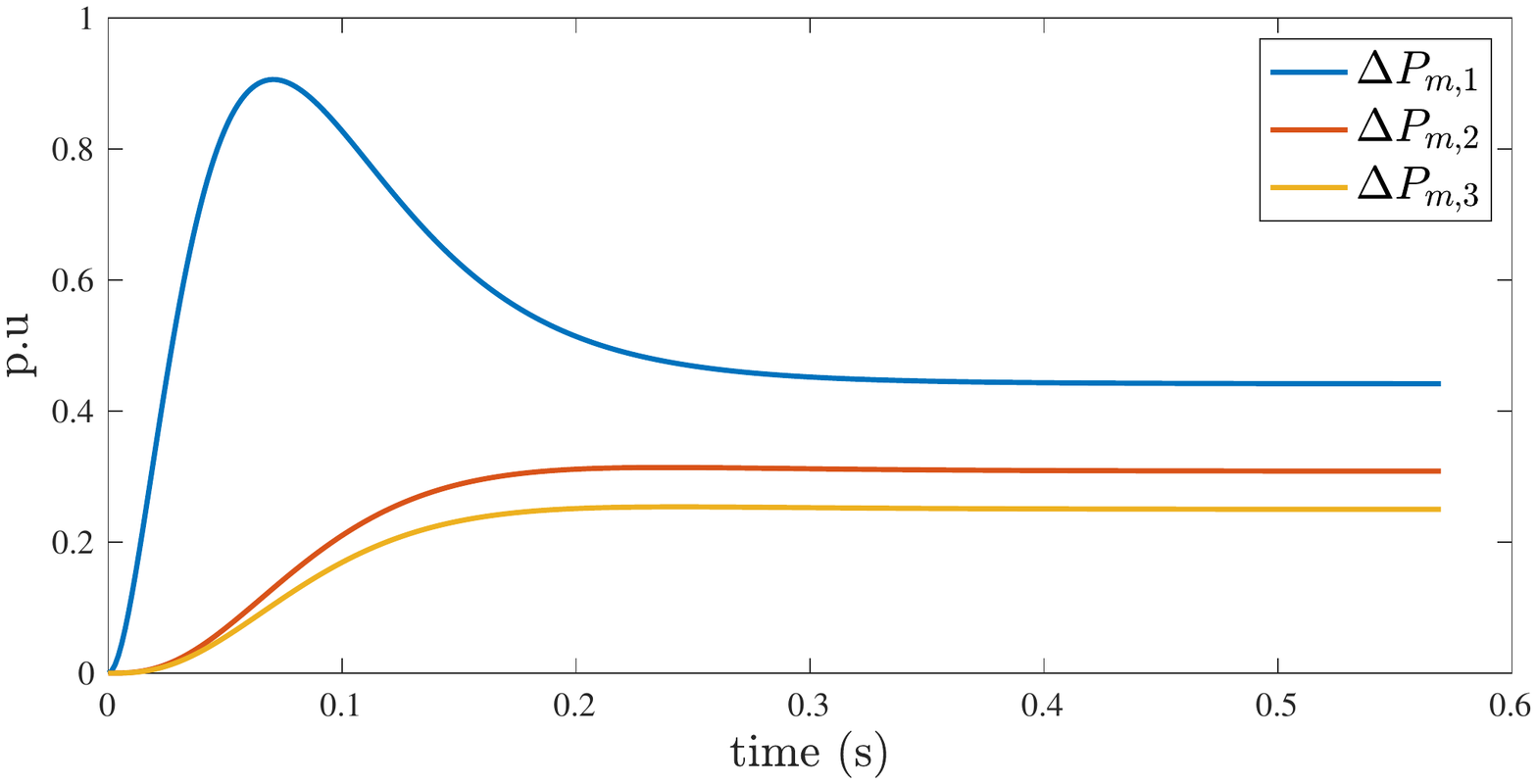}{}
    \caption{Mechanical powers $\Delta P_{m,i}$.}
    \label{fig:DPm}
    \end{subfigure}
    \end{center}
    \caption{System response after a step change in $\Delta P_{L,1}$.}
    \label{sys_response}
        \end{figure}
The system operator finally informs the agents that small-signal stability of the interconnected power grid is guaranteed and that their designed local and global control inputs can be implemented. 
\par To gain some insight on the performance of the interconnected power grid under the locally chosen control gains we simulated a step change in the load of bus 1. The system response is shown in Fig.~\ref{sys_response}. It is clear from this figure that the interconnected system manifests a stable and well-damped dynamic behavior in response to this load disturbance.  It is worthwhile noting that the frequency deviations return back to zero due to the angle feedback. This is shown in Fig.~\ref{fig:Dw}. The larger frequency and angle deviations are observed at bus 1 which is the bus which accommodates the load change. Lastly, from Fig.~\ref{fig:DPm}, we can see how the generators increase their mechanical power outputs to compensate for the load increase at bus 1. 
\par Collectively, the above numerical results corroborate that our proposed DSA methodology can give rise to a small-signal stable interconnected power grid that exhibits good dynamic behavior by only leveraging local stability guarantees. 
\section{Concluding remarks and future work}
In this paper, we present a comprehensive framework for distributed and compositional stability analysis of power grids. Our framework comprises a computationally efficient, privacy preserving and fully distributed methodology for evaluating and certifying stability of power grids. Our methodology mandates that first representative agents at various buses exchange information with their neighbors and design their local controls in order to meet a simple local stability condition. Subsequently, the system operator, by combining the local stability guarantees that are established by the agents, can conclude small-signal stability of the interconnected power grid. We analytically construct the local stability condition and prove that when it is satisfied the interconnected system is guaranteed to be stable. The effectiveness of our proposed distributed stability assessment methodology is illustrated via numerical results centered around a three-bus power grid example. In future work, we would like to explore ways of relaxing the conservativeness of the local stability condition even more.


\begin{thebibliography}{1}

  \bibitem{Siljak_dec}
D. Siljak,  ``Decentralized Control of Complex Systems", \emph{Dover Publications}, 1991.

\bibitem{Nikos_stability}
P. Kundur, J. Paserba, V. Ajjarapu, G. Andersson, A. Bose, C. Canizares, N. Hatziargyriou, D. Hill, A. Stankovic, C. Taylor, T. Van Cutsem, V. Vittal, ``Definition and classification of power system stability IEEE/CIGRE joint task force on stability terms and definitions", \textit{IEEE Transactions on Power Systems}, Vol. 19, Issue 3, Aug. 2004.

\bibitem{Nikos_microgrids}
N. Hatziargyriou, H. Asano, R. Iravani, C. Marnay, ``Microgrids", \textit{IEEE Power and Energy Magazine}, Vol. 5, Issue 4, pp. 78-94, 2007.


\bibitem{Siljak_large_scale}
D. Siljak,  ``Large-Scale Dynamic Systems: Stability and Structure", \emph{Dover Publications}, 1978.
\bibitem{Baros_emerging_control}
A. M. Annaswamy, A. Malekpour, S. Baros,`` Emerging research topics in control for smart infrastructures", \textit{Annual Reviews in Control}, Vol. 42, 2016, pp. 259-270.

\bibitem{Towards_dist_stab}
P. Yang, F. Liu, Z. Wang, C. Shen, J. Yi, W. Lin, ``Toward Distributed Stability Analytics for Power Systems with Heterogeneous Bus Dynamics", \textit{arXiv preprint}, 	arXiv:1908.00752, Aug. 2019.

\bibitem{Hill_distr_stab}
Y. Song, D. J. Hill, T. Liu, Y. Zheng, ``A Distributed Framework for Stability Evaluation
and Enhancement of Inverter-Based Microgrids", \textit{IEEE Transactions on Smart Grid}, Vol. 8, No. 6,  pp. 3020-3034, Nov. 2017.


\bibitem{Baros_distr_torque}
S. Baros, M. Ili\'c, ``Distributed Torque Control of Deloaded Wind DFIGs for Wind Farm Power Output Regulation", \textit{IEEE Transactions on Power Systems}, Vol. 32, No. 6, pp.  4590-4599, Nov. 2017.

\bibitem{Numerical_simulations}
B. Stott, ``Power system dynamic response calculations," \textit{Proceedings of the IEEE}, vol. 67, no. 2, pp. 219–241, 1979.

\bibitem{Baros_distr_wind_stor}
S. Baros, M. Ili\'c, ``A consensus approach to real-time distributed control of energy storage systems in wind farms", \textit{IEEE Transactions on Smart Grid}, Vol. 10, No. 1,  pp. 613-625, Jan. 2019.

\bibitem{Soumya_sos_approach}
S. Kundu, M. Anghel, ``A Sum-of-Squares Approach to the Stability and Control of
Interconnected Systems using Vector Lyapunov Functions", \textit{American Control Conference}, July 1-3, 2015, Chicago, IL, USA

\bibitem{LeXie_online_dynamic_security}
Y. Zhang and L. Xie, ``Online dynamic security assessment of microgrid interconnections in smart distribution systems," IEEE Transactions on Power Systemss, vol. 30, No. 6, pp. 3246–3254, 2015.

\bibitem{Baros_distr_consensus_wind_control}
S. Baros, A. M. Annaswamy, ``Distributed optimal wind farm control for fatigue load minimization: A consensus approach", \textit{International Journal of Electric Power \& Energy Systems}, Vol. 112, No. 6, pp. 452-459, Nov. 2019.

\bibitem{Soumya_distr_contr}
S. Kundu, M. Anghel, ``Distributed Coordinated Control of Large-Scale Nonlinear Networks", \textit{IFAC}, Vol. 48, Issue 22,  pp. 240-245, 2015


\bibitem{Baros_dyn_markets}
A. M. Annaswamy, S. Baros ``A Dynamic Framework for Electricity Markets", \textit{In: Meyn S., Samad T., Hiskens I., Stoustrup J. (eds) Energy Markets and Responsive Grids. The IMA Volumes in Mathematics and its Applications}, vol 162. Springer, New York, NY

\bibitem{Tabuada}
S. Y. Caliskan and P. Tabuada, ``Compositional Transient Stability Analysis of Multimachine Power Networks," IEEE Transactions on Control of Network Systems, vol. 1, No. 1, pp. 4–14, March 2014.


\bibitem{Masoud_Nazari}
M. H. Nazari, M. Ili\'c,, J. P. Lopes, ``Small-signal stability and decentralized control design for electric energysystems with a large penetration of distributed generators", \textit{Control Engineering Practice}, Vol.20, Issue 9, Sept. 2012, pp. 823-831.

\bibitem{Baros_rel_contracts}
D. D’Achiardi, N. Aguiar , S. Baros , V. Gupta, A. M. Annaswamy, ``Reliability Contracts Between Renewable and Natural Gas Power Producers", \textit{IEEE Transactions on Control of Network Systems}, Vol. 6, No. 3, pp. 1075-1085, Sept. 2019.

\bibitem{Direct_stab_anal}
H.-D. Chang, C.-C. Chu, and G. Cauley, ``Direct stability analysis of electric power systems using energy functions: theory, applications, and perspective," \textit{Proceedings of the IEEE}, vol. 83, no. 11, pp. 1497–1529, 1995.

\bibitem{eig_analysis}
D.Y. Wong, G. J. Rogers, B. Porretta, P. Kundur  ``Eigenvalue Analysis of Very Large Power Systems", \textit{IEEE Transactions on Power Systems}, Vol. 3, No. 2, pp.  472-480, May 1988.


\bibitem{Baros_opt_AGC}
D. Shiltz, S. Baros, M. Cvetkovic, A. M. Annaswamy, ``Integration of Automatic Generation Control and Demand Response via a Dynamic Regulation Market Mechanism", \textit{IEEE Transactions on Control System Technology}, Vol. 27, No. 2, pp. 631-646, March 2019.

\bibitem{Khalil}
H. K. Khalil, Nonlinear Systems, \textit{Englewood Cliffs, NJ, USA:Prentice-Hall}, 2002.

\bibitem{LeXie_transient_stab}
Y. Zhang and L. Xie, ``A transient stability assessment framework in
power electronic-interfaced distribution systems," \textit{IEEE Transactions on Power Systems}, vol. 31, no. 6, pp. 5106–5114, 2016.

\bibitem{LeXie_interactive_control}
Y, Zhang, L. Xie, and Q. Ding, ``Interactive Control of Coupled Microgrids for Guaranteed System-Wide Small Signal Stability," \textit{IEEE Transactions on Power Systems}, vol. 7, no. 2, pp. 1088 - 1096, 2016.

\bibitem{Turitsyn_microgrids}
P. Vorobev, Po-Hsu Huang, M. A. Hosani, J. L. Kirtley and K. Turitsyn, ``High-Fidelity Model Order Reduction for Microgrids Stability Assessment
," \textit{IEEE Transactions on Power Systems}, vol. 33, no. 1, pp. 874 - 887, 2018.

\bibitem{Baros_ODeePC}
S. Baros, CY. Chang, GE. Colon-Reyes, A. Bernstein, ``Online Data-Enabled Predictive Control," \textit{arXiv preprint arXiv:2003.03866}.

\bibitem{Nikos_DER}
J. A. Pecas Lopes, N. Hatziargyriou, J. Mutale, P. Djapic and N. Jenkins, ``Integrating distributed generation into electric power systems: A review of drivers, challenges and opportunities," \textit{Electric Power Systems Research}, vol. 77, no. 9, pp. 1189 - 1203, July 2007.



\end{thebibliography}
\end{document}